\documentclass[aps,%
 reprint,
 amsmath,amssymb,
 aps,
]{revtex4-2}
\usepackage{amsthm}
\usepackage{thmtools}
\usepackage{graphicx}
\usepackage{dcolumn}
\usepackage{bm}

\usepackage{xcolor} 
\usepackage[
  colorlinks=true,
  linkcolor=blue,   
  citecolor=blue,   
  urlcolor=blue     
]{hyperref}

\usepackage{diagbox}
\usepackage{etoolbox}
\usepackage{dsfont}
\usepackage{cleveref}
\usepackage[normalem]{ulem}

\AtBeginDocument{%
}

\usepackage{array,makecell,graphicx,ragged2e} 
 
\newcommand{\mcell}[2][1.7em]{\makebox[#1][c]{$#2$}} 
\newcommand{\catprod}[2]{\left(\mcell[1.6em]{#1}\!\otimes\!\mcell[1.6em]{#2}\right)}
\newcommand{\entry}[3]{\mcell[1.6em]{#1}\!\otimes\!\catprod{#2}{#3}}

\newcommand{\ket}[1]{\left| #1 \right\rangle}
\newcommand{\bra}[1]{\left\langle #1 \right|}

\newcommand{\vect}[1]{\mathbf{#1}}

\newtheorem{definition}{Definition}
\newtheorem{result}{Result}

\newtheorem{corollary}{Corollary}

\begin{document}


\title{Catalytic Activation of Genuine Multipartite Entanglement and Nonlocality}

\author{Eliot Donnadieu}
\author{Pavel Sekatski}
\author{Nicolas Brunner}
\author{Victor Barizien}
\affiliation{
 Department of Applied Physics, University of Geneva, Geneva, Switzerland}
 \affiliation{
 Geneva Quantum Centre, University of Geneva, Geneva, Switzerland}




\date{\today}
\raggedbottom
\begin{abstract}
We demonstrate the possibility to activate genuine multipartite entanglement (GME), the strongest form of entanglement for multipartite states, within the framework of quantum catalysis. Specifically, we show that any biseparable state (i.e.~not GME) that is not partition separable can be deterministically transformed into a GME state via the help of a catalyst and local operations, without any classical communication. In turn, we construct a catalytic protocol tailored to the multipartite case. The protocol is termed ``sum-to-product'', as it transforms a mixture of states into their tensor product in a heralded manner. We apply this protocol to random network entangled states, which are biseparable by construction, and demonstrate catalytic activation of both GME and genuine multipartite Bell nonlocality. 
\end{abstract}

\maketitle

\section{introduction}
Multipartite entanglement is a key concept in quantum physics, both from a conceptual and applied perspective \cite{Horodecki_2009,Friis2018}. Its characterization is therefore an important task, which is made challenging by the existence of many different forms of multipartite entanglement \cite{Guhne2009,horodecki24}. Readily adapting the concept from the bipartite case, an $N$-partite quantum state is said to be entangled if it is not fully separable, i.e.~it cannot be written as a convex mixture of product states over $N$ parties. Remarkably, multipartite systems can exhibit a much stronger form of entanglement: a state is called genuinely multipartite entangled (GME)~\cite{Seevinck2001} if it is not biseparable, meaning that it cannot be written as a convex mixture of states that are separable with respect to some bipartition of the $N$ parties.
GME has emerged as an important resource for multipartite quantum processing tasks, including quantum computation~\cite{Bruss11}, quantum metrology~\cite{Toth12}, and quantum key distribution in networks~\cite{Epping17}. 

From a resource-theoretic perspective, the notion of GME is defined with respect to the paradigm of local operation and classical communication, similarly to the bipartite case. Specifically, no stochastic local operation and classical communication (SLOCC) protocols can create GME starting from biseparable states, see e.g.~\cite{Ma25,horodecki24}. Surprisingly, however, it has been shown that some biseparable states can become GME when combining several copies~\cite{Yamasaki_2022}. In fact, this is possible for all states that are not partition-separable~\cite{Palazuelos22}, i.e.~not separable with respect to any fixed bipartition. This highlights a fundamental difference between GME and bipartite (or standard multipartite) entanglement, since separable states remain separable for any number of copies. This effect of superactivation of GME has attracted growing interest recently \cite{Weinbrenner2025,Baksova2025,Starek2026,baksová2026,zhang26}.

Here, we uncover another form a activation of GME via the process of quantum catalysis \cite{catalysis,Datta2023}. We show that a single-copy of a biseparable state, combined with an auxiliary system (the catalyst), can be transformed into a GME state while the catalyst is returned unchanged. Importantly, this can be done deterministically with only local operations, and no  classical communication. First, we show that this effect is generic: GME can be activated for any bisparable that is not partition separable via catalytic local operations (CLO) \cite{clod}. Moreover, the activation of GME is even possible using a catalyst that is itself biseparable. Then, we develop an efficient catalytic protocol (within CLO) tailored to the multipartite case. We apply this protocol to a class of mixed states that we term ``random network states'', where pure entanglement is shared stochastically between subsets of parties. Beyond the activation of GME, we show that this protocol leads to activation of an even stronger form of quantum correlations, namely genuine multipartite Bell nonlocality (GMNL)~\cite{Svetlichny87,Bancal13}, a resource for multipartite device-independent protocols \cite{Ribeiro2018,GMNL2}. To this end, we establish a technical result, which might be of independent interest, showing that GMNL distributions remain GMNL under mixing with distributions producing complementary (i.e.~orthogonal) outputs.

\section{catalytic activation of GME}\label{section2}

Here, we work within the framework of correlated catalysis under local operations~\cite{clod}. In this context, $N$ parties share a state $\rho_S$, and in addition they have access to a $N$-partite catalyst in the state $\omega_C$. By means of local operations, i.e.~a product channel $\Lambda = \bigotimes_{i=1}^{N}\Lambda_i$, they transform their systems to the global state $ \tau_{S'C}:=\Lambda(\rho_S\otimes\omega_C),$
with
\begin{equation}\label{cdt}
    \text{Tr}_{S'}\left(\tau_{S'C}\right) = \omega_C \quad \text{and} \quad \text{Tr}_C\left(\tau_{S'C}\right) = \tau_{S'}.
\end{equation}
The left equality in Eq.~\eqref{cdt} expresses the fact that the catalyst is returned unchanged, nevertheless potentially correlated with the output system $S'$, i.e.~$\tau_{S'C} \neq \tau_{S'} \otimes \omega_C$. Importantly, the final system state $\tau_{S'}$ can contain a resource that $\rho_S$ did not have initially. Here, we will show that the final state $\tau_{S'}$ can be GME starting from an initial biseparable state $\rho_S$.

\begin{definition}
    An $N$-partite quantum state $\rho_{A_1\cdots A_N}$, shared between parties $A_1\cdots A_N=:S$, is GME~\cite{Seevinck2001} if it is not biseparable, i.e.~if there exists no decomposition of the form
\begin{equation}
    \rho_S
    =
    \sum_{\mathcal G|\bar{\mathcal G}}
    \sum_i
    p_{i,\mathcal G}\,
    \rho_{\mathcal G}^{(i)}
    \otimes
    \rho_{\bar{\mathcal G}}^{(i)} ,
\end{equation}
where the first sum runs over all bipartitions $\mathcal G|\bar{\mathcal G}$ of the parties, $p_{i,\mathcal G}\geq 0$, and $\sum_{\mathcal G|\bar{\mathcal G}}\sum_i p_{i,\mathcal G}=1$.
\end{definition}
Remarkably, it was shown recently that any biseparable state $\rho_S$ which is not partition-separable becomes GME in the multiple-copy scenario~\cite{Yamasaki_2022,Palazuelos22}. More precisely, if there is no bi-partition $\mathcal{G}|\bar{\mathcal{G}}$ for which $\rho_S$ can be written as
\begin{equation}\label{eq: part-sep}
\rho_S
=
\sum_i p_i\,
\rho_{\mathcal{G}}^{(i)}
\otimes
\rho_{\bar{\mathcal{G}}}^{(i)},
\end{equation}
then there exists a finite number of copies \(k\) such that $\rho^{\otimes k}_{S}$
is GME. This multi-copy property has a direct application within the catalytic framework. Specifically, the CLO protocol recently presented in Ref.~\cite{bavaresco}, which we refer to as “multi-copy catalysis”, achieves the probabilistic (but heralded) transformation of a state into multiple copies of itself. This protocol, described below, is an adaptation of the catalytic protocol discussed in~Refs.~\cite{Duan2005,Lipka2021,Wilming2021,Ganardi2024}.

More precisely, for any biseparable state $\rho_S$, there exists a set of local operations $\Lambda$ and an $N$-partite catalyst state $\omega_C$ such that the catalytic conditions given in Eq.~\eqref{cdt} are satisfied, and
\begin{equation}\label{eq: taus}
    \tau_{S'}
    =
    \frac{1}{k} \rho^{\otimes k}\otimes[0]
    +
    \frac{k-1}{k} \sigma^{\otimes k}\otimes [1],
\end{equation}
where $\sigma_{S}:=\sigma_{A_1}\otimes \cdots \otimes \sigma_{A_N}$ is an arbitrary product state (hence locally preparable by the parties), and 
\begin{equation}
 [i]_S=\ket{i}\bra{i}_{A_1}\otimes\cdots\otimes \ket{i}\bra{i}_{A_N}
\end{equation}
denotes a classically correlated ``flag'' state. The final state $\tau_{S'}$ is thus a ``flagged mixture'' of the multi-copy state $\rho_{S}^{\otimes k}$, and a fully separable product state $\sigma^{\otimes k}$. When $\rho_{S}$ is not partition separable, there exists a finite $k$ such that $\rho_{S}^{\otimes k}$ is GME. It follows that $\tau_{S'}$ is GME since the reading of the local flags is an SLOCC operation, which cannot create GME. Therefore, any biseparable $\rho_{S}$ that is not partition separable can be catalytically transformed into a GME state.  Notice that this activation can be done with a catalyst that is itself in a biseparable state. For this, consider the minimal $k$ such that $\rho^{\otimes k}$ is GME. In this case the catalyst leading to Eq.~\eqref{eq: taus} contains terms with at most $k-1$ copies of $\rho$ and is biseparable.


Conversely, note that CLO cannot activate GME from partition-separable states. Indeed, the squashed entanglement~\cite{christandl02squashed} is a faithful bipartite entanglement measure~\cite{Brandao2011} (i.e.~it is zero iff the state is separable), which cannot be increased across the bipartition given in Eq.~\eqref{eq: part-sep} under Catalytic Local Operations and Classical Communication (CLOCC)~\cite{clod}. Thus partition-separable states remain partition separable under CLO. The above discussion is summarized in the following result.

\begin{result}\label{result1}
    A biseparable state can be transformed into a GME state by CLO if and only if it is not partition-separable. Moreover, this can be achieved with a biseparable catalyst.
\end{result}

This result is surprisingly strong: indeed, while bipartite entanglement cannot be activated catalytically, CLO is sufficient to activate GME as long as the initial state is not partition-separable.
In particular, note that no classical communication is required.

While it is always possible to activate GME using multi-copy catalysis, it is natural to ask wether there exists other CLO protocols that are more efficient in terms of production rate and resources requirements (size and complexity of the catalyst). Another direction is to consider the activation of stronger forms of multipartite resources than GME, such as GMNL. In the following we address both questions, starting with the former.

\section{sum-to-product catalysis\label{newcatalysis}}

In this section, we present a catalytic protocol that transforms any flagged mixture of quantum states
\begin{equation}\label{mixturegeneral}
     \rho_S=\sum_i \alpha_i \rho_i\otimes [i]
 \end{equation}
with weights $\alpha_i\geq0$, such that $\sum_i \alpha_i =1$, into a flagged mixture involving a branch $\bigotimes_i\rho_i$ with all states present simultaneously. 

\begin{restatable}{theorem}{THMOne}\label{lem1}
    Any $N$-partite state of the form of Eq.~\eqref{mixturegeneral} can be transformed by CLO into
\begin{equation}
\tau_{S'}=\frac{1}{C_\alpha}\bigotimes_{k=1}^n \rho_k\otimes [1]+\sum_{i=2}^{n^2}p_i\gamma_i\otimes[i],
\label{target}
\end{equation}
where $p_i \geq 0$, $\sum_{i=2}^{n^2}p_i=\left(1-\frac{1}{C_\alpha}\right)$, $\gamma_i$ are quantum states on $S'$,  and $C_\alpha := \sum_{i=1}^n 1/\alpha_i$. The state of the catalyst is given by
\begin{equation}
    \omega_C = \frac{1}{C_\alpha} \sum_{j=1}^{n}  \frac{1}{\alpha_j}
\sigma^{\otimes (j-1)}\otimes \left(
\bigotimes_{k=j+1}^n \rho_k \right) \otimes [j],
\label{catalyst}
\end{equation}
with any product state $\sigma=\sigma_{C_1}\otimes\cdots\otimes\sigma_{C_N}$.
\end{restatable}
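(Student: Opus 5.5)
The plan is to adapt the Duan-type "staircase" catalytic construction used for multi-copy catalysis to the flagged mixture in Eq.~\eqref{mixturegeneral}. The key observation is that both the system flag $[i]$ and the catalyst flag $[j]$ are classically correlated states held identically by every party. Each party can therefore read the pair $(i,j)$ locally, and all parties agree on it without communicating. So any operation of the form ``conditioned on $(i,j)$, apply a fixed product channel'' is CLO. I write $q_j := \frac{1}{C_\alpha\alpha_j}$ for the weight of branch $j$ of the catalyst in Eq.~\eqref{catalyst}, so that $\sum_j q_j=1$. The joint branch $(i,j)$ of $\rho_S\otimes\omega_C$ has weight $\alpha_i q_j$, and in it the systems hold $\rho_i$ together with $\sigma^{\otimes(j-1)}\otimes\bigotimes_{k=j+1}^n\rho_k$.

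\textbf{Success branches.} Conditioned on $i=j$, the parties hold $\rho_j\otimes\bigotimes_{k>j}\rho_k$, i.e.\ slots $j,\dots,n$ are filled, plus $j-1$ copies of $\sigma$.
\begin{itemize}
\item If $j\geq 2$: move $\rho_j$ into slot $j$ of the catalyst register and discard one $\sigma$. Since $\sigma$ is a product state, the other copies can be regenerated locally and registers permuted locally. Set the catalyst flag to $[j-1]$. The catalyst then holds exactly $\sigma^{\otimes(j-2)}\otimes\bigotimes_{k\geq j}\rho_k$, which is branch $j-1$ of $\omega_C$. The output system is set to some locally prepared product state with a dedicated flag.
\item If $i=j=1$: the parties hold the full product $\bigotimes_{k=1}^n\rho_k$. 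They output it on $S'$ with flag $[1]$, prepare $\sigma^{\otimes(n-1)}$ locally in the catalyst register, and set the catalyst flag to $[n]$.
\end{itemize}

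\textbf{Failure branches.} Conditioned on $i\neq j$, the parties leave the catalyst untouched. They output $\rho_i$, padded with local product states, on $S'$, flagged by a label encoding the pair $(i,j)$. Together with the flags from the success branches, the outputs use at most $n^2$ distinct flags. Only the branch $i=j=1$ carries flag $[1]$, and its weight is $\alpha_1q_1=1/C_\alpha$, giving the first term of Eq.~\eqref{target}. All remaining branches are collected as $\sum_{i\geq2}p_i\gamma_i\otimes[i]$, with total weight $1-1/C_\alpha$.

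\textbf{Catalyst preservation.} This is the step that forces the specific weights $q_j$, and it is the main point to check. Catalyst branch $j$ (indices taken cyclically) receives probability mass from two sources:
\begin{itemize}
\item from success in branch $j+1$ (or from $(1,1)$ when $j=n$), with mass $\alpha_{j+1}q_{j+1}=1/C_\alpha$;
\item from its own failures, with mass $q_j(1-\alpha_j)=q_j-1/C_\alpha$.
\end{itemize}
These sum to exactly $q_j$. In every case the state left in the catalyst register conditioned on flag $[j]$ is the prescribed $\sigma^{\otimes(j-1)}\otimes\bigotimes_{k>j}\rho_k$. Hence $\mathrm{Tr}_{S'}\tau_{S'C}=\omega_C$.

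Correlations between the catalyst and $S'$ are allowed by Eq.~\eqref{cdt}, so nothing further is needed. The only remaining care is bookkeeping: making register sizes uniform across branches by padding with local product states, so that each party's operation is a well-defined local channel.
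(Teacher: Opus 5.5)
Your proposal is correct and follows essentially the paper's proof. It uses the same local swaps: completing $\bigotimes_k\rho_k$ in branch $(1,1)$ and resetting the catalyst to $\sigma^{\otimes(n-1)}$, pushing $\rho_j$ into the catalyst on the diagonal entries $(j,j)$, and leaving off-diagonal entries untouched. It also uses the same cyclic relabeling $[j]\to[j-1]$ with $[0]\equiv[n]$, and the same weight count $q_j(1-\alpha_j)+1/C_\alpha=q_j$ to establish $\mathrm{Tr}_{S'}\tau_{S'C}=\omega_C$.
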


\begin{proof}
We present here the case $n=3$ with uniform weights $\alpha_i = 1/3$, while the general case follows the same idea and can be found in Appendix~\ref{appendix1}. The initial system state is
\begin{equation}
    \rho_S=\frac{1}{3}(\rho_1\otimes[1]+\rho_2\otimes[2]+\rho_3\otimes[3]),
\end{equation} 
while the catalyst is
\begin{equation}
    \omega_{C}=\frac{1}{3}\left(\rho_2\otimes \rho_3\otimes[1]+\sigma \otimes \rho_3\otimes[2]+\sigma \otimes \sigma \otimes[3]\right).\!
    \label{cat3}
\end{equation}
The state $\rho_{S}\otimes \omega_C$ is therefore a mixture of 9 elements, as represented in Table~\ref{cat3step1}. 
We now proceed to construct the set of local operations $\Lambda$ leading to the desired transformation. The first step is to enlarge the system by locally preparing two product states $\sigma\otimes\sigma$. The remaining operations are only made of local swaps and flags relabeling, as described below in steps (i)-(iv).
 \begin{table}[t]
\setlength{\tabcolsep}{6pt}
\renewcommand{\arraystretch}{1.7} 
\small 
\begin{ruledtabular}
\begin{tabular*}{\columnwidth}{@{\extracolsep{\fill}}ccc}
\multicolumn{1}{c}{\shortstack{Syst$\otimes$(Cat)}} &
\multicolumn{1}{c}{\shortstack{Syst$\otimes$(Cat)}} &
\multicolumn{1}{c}{\shortstack{Syst$\otimes$(Cat)}}\\
\hline
$\entry{\rho_1}{\rho_2}{\rho_3}$ &
$\entry{\rho_2}{\rho_2}{\rho_3}$ &
$\entry{\rho_3}{\rho_2}{\rho_3}$\\[4pt]
$\entry{\rho_1}{\sigma}{\rho_3}$ &
$\entry{\rho_2}{\sigma}{\rho_3}$ &
$\entry{\rho_3}{\sigma}{\rho_3}$\\[4pt]
$\entry{\rho_1}{\sigma}{\sigma}$ &
$\entry{\rho_2}{\sigma}{\sigma}$ &
$\entry{\rho_3}{\sigma}{\sigma}$
\end{tabular*}
\end{ruledtabular}
\caption{Table representation of the state $\rho_S\otimes \omega_C$.
Each entry gives the system-catalyst state corresponding to the flag state $[i]_S\otimes[j]_C$. The target system state $\rho_1 \otimes \rho_2 \otimes \rho_3$ is obtained at entry $[1]_S\otimes[1]_C$ by using $\rho_2 \otimes \rho_3$ from the catalyst. For some of the other entries, the state of the system is send to the catalyst to recover its state on average.}
\label{cat3step1}
\end{table}

\noindent \textbf{(i)} When parties read flag $[1]_S\otimes [1]_C$, they build the tensor product state $\rho_1\otimes \rho_2\otimes \rho_3$ by swapping $\sigma\otimes\sigma$ with the state $\rho_2\otimes \rho_3$ of the catalyst, 
 \begin{equation}
     \rho_1\otimes\sigma\otimes \sigma\otimes(\rho_2\otimes \rho_3)\rightarrow \rho_1\otimes\rho_2\otimes \rho_3\otimes(\sigma\otimes \sigma).
 \end{equation}
After that step, the tensor product branch of the final system state is reached. Additional operations are necessary to recover the catalyst in the second marginal. To do so, one needs to recover 3 times each elements 
of the catalyst. Note that the 3 states $\sigma \otimes \sigma$ are already recovered in entries $[1]\otimes[1]$ and $[x] \otimes [3]$ for $x\neq 3$.

\noindent \textbf{(ii)}~Entries $[1]_S\otimes [x]_C$ for $x\neq1$ already give 2 states $\rho_2\otimes\rho_3$ in the catalyst subspace. Therefore no actions are required when parties measure these flags. 
The third state $\rho_2\otimes\rho_3$ is recovered in the diagonal entry $[2]_S\otimes[2]_C$ by swapping $\rho_2$ system's state and $\sigma$ of the catalyst.

\noindent \textbf{(iii)}~Entries $[2]_S\otimes [x]_C$ for $x\neq2$ again give 2 states $\rho_3\otimes\sigma$ for the catalyst. The third state $\rho_3\otimes\sigma$ is recovered in the diagonal entry by swapping $\rho_3$ system's state and $\sigma$ of the catalyst.

\noindent \textbf{(iv)} At each step, parties add a copy of the catalyst flag in the system $[i]_S\otimes[j]_C\rightarrow [i]_{S'}\otimes[j]_{S'}\otimes [j]_C$ and perform the cyclic relabel of the diagonal entries $[i]_{S'}\otimes [i]_{S'}\otimes[i]_C\rightarrow [i-1]_{S'}\otimes [i-1]_{S'}\otimes[i-1]_C$ with $[0]\equiv[3]$.

After these operations, the final state $\tau_{S'C}$ is such that the partial trace $\tau_{S'} :=\text{Tr}_C(\tau_{S'C})$ is (up to a cyclic flag relabeling) of the form given in Eq.~\eqref{target}, whereas the marginal $\text{Tr}_{S'}(\tau_{S'C})$ is exactly the catalyst given in Eq.~\eqref{cat3}. The catalytic marginal conditions in Eq.~\eqref{cdt} are thus satisfied.
\end{proof}

This protocol can be viewed as a generalization of multi-copy catalysis. Indeed, if all $\rho_i$ in Eq.~\eqref{mixtinit} are the same, our catalyst becomes precisely the one introduced in Theorem~1 of Ref.~\cite{bavaresco}. More generally, it is interesting to compare the two protocols in terms of resources. Indeed, the tensor product state~$\bigotimes_i \rho_i$ can also be obtained probabilistically by taking $n$ copies of the state in Eq.~\eqref{mixtinit}, and thus also by using the multi-copy catalysis. 
However, this is much more costly in terms of the size of the catalyst, as it requires $\mathcal{O}(n^{n-1})$ states, instead of $\mathcal{O}\left(\frac{n^2}{2}\right)$ states for our catalyst in Eq.~\eqref{catalyst}. Moreover, the multi-copy catalyst produces the target tensor product state with probability $\frac{n!}{n^{n+1}}$, while our protocol  allows for a production rate scaling as $\frac{1}{n^2}$, a significant improvement. Our catalytic protocol is thus more efficient for both figures of merit. Finally, our protocol can work with a partition-separable catalyst, in contrast to the multi-copy catalysis, see Appendix~\ref{exampleapendix}.

A surprising property of our protocol is that the catalyst in Eq.~\eqref{catalyst} is not unique. As shown in Appendix~\ref{appendix1}, there exists a continuous set of valid catalysts, generated by the convex mixture of $n!$ canonical catalysts, constructed from permutations of the $n$ states $\rho_i$'s.

Another property of our protocol is that it can be concatenated, i.e.~two such protocols can be applied sequentially to define another CLO protocol. This property is not a priori guaranteed, as such concatenation may in general correlate the two catalysts.

\begin{figure*}
    \centering
    \includegraphics[trim=5.5cm 7.5cm 6.5cm 6cm, clip=true, width=.9\linewidth]{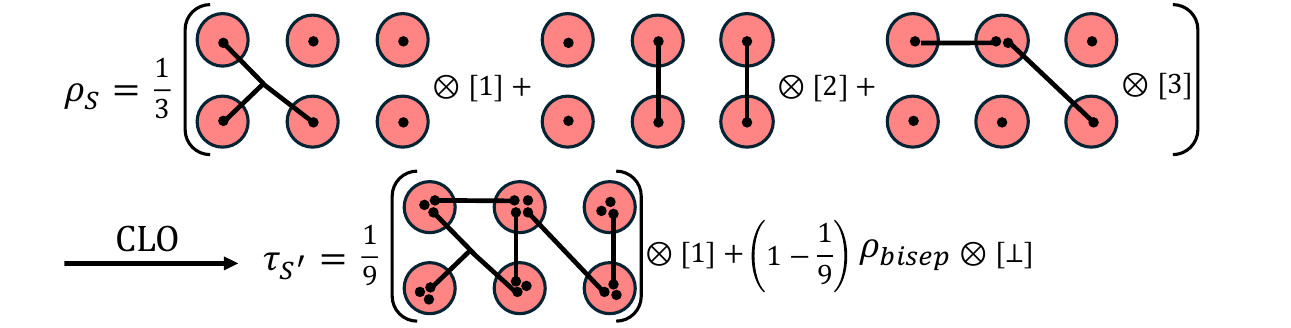}
    \caption{Graphical representation of an initial $6$-partite random network state $\rho_S$ with equal weights, where solid lines represent pure entanglement, dark dots are quantum systems, and red circles are parties. This initial state is a flagged mixture of: (i) a GHZ-type state~\cite{Greenberger90} shared between three parties
    ; (ii) two pure bipartite entangled states shared between two groups of two parties; (iii) two pure bipartite entangled states shared between three parties. This state is transformed by CLO into a final state $\tau_{S'}$ that is a flagged mixture of an entangled connected network state and a biseparable state. Accordingly to Result~\ref{result1}, the state $\tau_{S'}$ is GME. For completeness, the catalyst state used for this transformation is given in Appendix~\ref{app:catalystfig1}.}
    \label{image}
\end{figure*}

\begin{restatable}{theorem}{THMtwo}
\label{lem2}
    Consider a CLO $(\omega_{01},\Lambda_{01})$ that enables the transformation
    \begin{equation}
        \rho_0\rightarrow p_1\rho_1\otimes [1]+\sum_{i>1}p_i\sigma_0^{(i)}\otimes[i]
    \end{equation}
     and a second one $(\omega_{12},\Lambda_{12})$ that allows to transform
     \begin{equation}
         \rho_1\rightarrow q_1\rho_2\otimes [1]+\sum_{i>1}q_i\sigma_1^{(i)}\otimes[i]
     \end{equation}
     where the catalysts $\omega_{01}, \omega_{12}$ are of the form $\sum_i\eta_i\omega_i\otimes[i]$. 
    
    Then there exists a CLO $(\omega_{01}\otimes\omega_{12},\Lambda_{02})$ that transforms $\rho_0$ into $p_1q_1\rho_2\otimes[1]+\sum_{i>1}\xi_i\sigma_2^{(i)}\otimes[i]$.
\end{restatable}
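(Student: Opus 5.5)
The plan is to run the two protocols one after the other. The only thing to check is that the second catalyst is still returned exactly in its marginal.

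\textbf{Construction.} The parties start with $\rho_0\otimes\omega_{01}\otimes\omega_{12}$ and proceed in three stages.
\begin{enumerate}
\item Apply $\Lambda_{01}$ to $\rho_0\otimes\omega_{01}$. This produces $\tau^{(1)}_{S_1C_{01}}$. Its $C_{01}$-marginal is $\omega_{01}$, and its $S_1$-marginal is $p_1\rho_1\otimes[1]+\sum_{i>1}p_i\sigma_0^{(i)}\otimes[i]$.
\item Each party reads its local copy of the system flag. This is legitimate because the flags are classically correlated product projectors $[i]$, so every party sees the same value $i$ without any communication. Conditioned on flag $1$, the parties apply $\Lambda_{12}$ to the $\rho_1$ part of the system together with $\omega_{12}$. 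Conditioned on $i>1$, they apply the identity to $\omega_{12}$ and leave the system as $\sigma_0^{(i)}$. This controlled map is a product of local controlled channels, since every party holds the flag.
\item Relabel the flags. The branch (first flag $1$, second flag $1$) becomes the new flag $1$. All other pairs are mapped injectively onto labels $>1$, with the local states $\sigma_2^{(\cdot)}$ collecting $\sigma_1^{(j)}$ and $\sigma_0^{(i)}$ appended to the flag copies.
\end{enumerate}
The system marginal is then $p_1q_1\rho_2\otimes[1]$ plus junk. This uses only that, within the flag-$1$ branch, the system part is exactly $\rho_1$ at the level of the reduced state after tracing out $C_{01}$.

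\textbf{Subtlety: correlations between $S_1$ and $C_{01}$.} Conditioned on flag $1$, the system is correlated with $C_{01}$, so the joint input to $\Lambda_{12}$ together with $C_{01}$ is not a product. Since $\Lambda_{12}$ acts only on $S_1$ and $C_{12}$, it cannot disturb the $C_{01}$-marginal, so $\omega_{01}$ is still returned exactly. It likewise leaves the system output of the second step unaffected once $C_{01}$ is traced out.

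\textbf{Main obstacle: recovering $\omega_{12}$.} The $C_{12}$-marginal in the flag-$1$ branch is $\omega_{12}$, because $\Lambda_{12}$ acting on $\rho_1\otimes\omega_{12}$ returns $\omega_{12}$ by the catalytic condition. In the other branches $\omega_{12}$ is untouched, so the total $C_{12}$-marginal is again $\omega_{12}$.

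The worry is the joint marginal on $C_{01}C_{12}$. It is $\omega_{01}\otimes\omega_{12}$ only if we require just the two marginals separately. Here the catalyst is declared to be $\omega_{01}\otimes\omega_{12}$, so the joint state on $C_{01}C_{12}$ must be exactly the product. It is not automatic, because correlations between $C_{01}$ and the flag can propagate to $C_{12}$.

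The flagged form $\omega=\sum_i\eta_i\omega_i\otimes[i]$ is what I would exploit. The first attempt is to also condition the second step on the catalyst flags of $\omega_{01}$, or equivalently to apply the second protocol with randomly permuted labels. The aim is to decorrelate the catalysts, so that the output on $C_{01}C_{12}$ factorizes.

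Failing that, the fallback is to accept correlated catalysis in the weaker sense. In that sense the combined catalyst only needs each factor's marginal preserved, and the global $C$-marginal would be restored via the protocol's swaps. I would verify this branch by branch, using the flag tables as in the proof of Theorem~\ref{lem1}. This verification is where the proof really lives.
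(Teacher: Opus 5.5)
There is a genuine gap. You correctly note that the catalyst must come back as the joint product $\omega_{01}\otimes\omega_{12}$ on $C_{01}C_{12}$, but you never prove it, and your diagnosis of the danger is off.

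Correlations between $C_{01}$ and the \emph{flag} are harmless. In branches $i>1$ the register $C_{12}$ is untouched, and in branch $1$ it is returned as $\omega_{12}$. Flag-only correlations therefore sum to $\sum_i p_i\,\omega_{01}^{(i)}\otimes[i]\otimes\omega_{12}=\omega_{01}\otimes\omega_{12}$.

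What can propagate is correlation between $C_{01}$ and the system \emph{inside} the flag-$1$ branch. Suppose that branch is $\chi_{SC_{01}}$ with $\text{Tr}_{C_{01}}\chi=\rho_1$, but $\chi=\sum_k\lambda_k\rho_1^{(k)}\otimes\omega^{(k)}$. After $\Lambda_{12}$, the $C_{01}C_{12}$ marginal of that branch is $\sum_k\lambda_k\,\omega^{(k)}\otimes\text{Tr}_S\Lambda_{12}(\rho_1^{(k)}\otimes\omega_{12})$. But $\Lambda_{12}$ only guarantees $\omega_{12}$ for the averaged input $\rho_1$.

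Your remedies do not address this mechanism:
\begin{itemize}
\item Conditioning on the flags of $\omega_{01}$ only splits the branch into sub-branches where the system is in $\rho_1^{(k)}\neq\rho_1$. For those inputs $\Lambda_{12}$ has no catalytic guarantee.
\item Random relabeling does nothing to remove the dependence of the returned $C_{12}$ state on $k$.
\item Your fallback, preserving only each factor's marginal, does not satisfy Eq.~\eqref{cdt} for the declared catalyst $\omega_{01}\otimes\omega_{12}$. It proves a weaker statement.
\end{itemize}

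The missing ingredient is a structural property of the first protocol. The paper writes its output, ``without loss of generality'', as $\Lambda_{01}(\rho_0\otimes\omega_{01})=p_1\rho_1\otimes[1]_S\otimes\omega_{01}^{(1)}\otimes[1]_C+\sum_{i>1}p_i\sigma_0^{(i)}\otimes[i]_S\otimes\omega_{01}^{(i)}\otimes[i]_C$. That is, within each flagged branch, system and catalyst are in a product state and are correlated only through perfectly matched flags. This is really an extra hypothesis, and the flagged form of the catalysts is meant to encode it. It holds for the swap-and-relabel protocols (Theorem~\ref{lem1} and multi-copy catalysis) that are concatenated later.

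With this property, the check is two lines. The flag-$1$ branch becomes $\Lambda_{12}(\rho_1\otimes\omega_{12})\otimes\omega_{01}^{(1)}$, whose catalyst marginal is $\omega_{12}\otimes\omega_{01}^{(1)}$. The other branches contribute $\omega_{01}^{(i)}\otimes\omega_{12}$. The sum is exactly $\omega_{01}\otimes\omega_{12}$.

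Your steps 1--3 otherwise coincide with the paper's construction. To close the proof, you need to add or justify this product-within-branch property, rather than defer the verification.
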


The proof of this theorem is given in Appendix~\ref{appendix2}. The theorem will be used later for the activation of GMNL.

\section{Catalytic activation of random network states}

We now apply the catalytic protocol to demonstrate activation of GME and GMNL. To this end, we focus on the following class of states.

\begin{definition}
    A multipartite random network states is given by
    \begin{equation}\label{mixtinit}
    \rho_{S}=\sum_{i=1}^n \alpha_i \,\Psi_i\otimes [i],
\end{equation}
where each $\Psi_i$ is pure and GME on one or several subsets of parties; see Fig.~\ref{image} for an example.
\end{definition} 

Such states occur in multipartite networks in which different subsets of parties are randomly connected through shared sources of entanglement. 
In particular, each subset of parties connected by some $\Psi_i$ can be represented by a graph where edges are drawn between all parties appearing in the subset. If the union of these graphs (edges) is not fully connected, the state $\rho_S$ is partition separable. Indeed, the parties can then be partitioned in two groups $\mathcal G|\bar{\mathcal G}$ with no edges between $\mathcal G$ and $\bar{\mathcal G}$, i.e.~no entanglement distributed across the bipartition. 
Conversely, we prove in the following that if the union of these graphs is fully connected, $\rho_S$ becomes GME and even GMNL (see~\Cref{def: GMNL}) via the CLO protocols introduced in Section~\ref{newcatalysis}.

\begin{result}
    \label{theorem1}
    Any random network state (Eq.~\ref{mixtinit}) that is not partition separable exhibits CLO activation of GME via the sum-to-product protocol.
\end{result}

Indeed, applying Theorem~\ref{lem1} to any flagged mixture as in Eq.~\eqref{mixtinit} results in
\begin{equation}
\tau_{S'}=\frac{1}{n^2}\Psi_{\text{ECN}}\otimes[1]+\left(1-\frac{1}{n^2}\right)\gamma \otimes [\perp],
\end{equation}
where $\gamma$ is an $N$-partite state, one has performed the flag relabelling $[1]_{S'}\rightarrow [1]_{S'}$ and $[i]\rightarrow[\perp]$ $\forall i\neq 0$, with $[\perp]$ an orthogonal flag to $[1]$, and $\Psi_{\text{ECN}}=\bigotimes_i \Psi_i$ is a pure entangled connected network (ECN) state, which are GME~\cite{GMNL2}. Hence, $\tau_{S'}$ is also GME. For an illustrative example, see Fig.~\ref{image}.

To go further, we show that our new protocols can also activate the stronger resource of GMNL. Indeed, GME is necessary for GMNL, but not sufficient~\cite{Augusiak15,Bowles16genuinely}.

\begin{definition} \label{def: GMNL}
    A state is said to be GMNL if it can be locally measured to generate a probability distribution $p(\mathbf{a}|\mathbf{x})=\text{Tr}(M_{A_1}^{a_1|x_1}\otimes\cdots\otimes M_{A_N}^{a_N|x_N}\rho_{A_1\cdots A_N})$ which is not decomposable as a convex mixture of local distributions with respect to bi-partitions, i.e.~that $p(\mathbf{a}|\mathbf{x})$ cannot be written as
\begin{equation}\label{GMNLdistrib}
p(\mathbf{a}|\mathbf{x})=\sum_{\mathcal{G}}\eta_{\mathcal{G}}\sum_\lambda\mu_\mathcal{G}(\lambda)p_{\mathcal{G}}(\mathbf{a}_{\mathcal{G}}|\mathbf{x}_{\mathcal{G}},\lambda)p_{\bar{\mathcal{G}}}(\mathbf{a}_{\bar{\mathcal{G}}}|\mathbf{x}_{\bar{\mathcal{G}}},\lambda),
\end{equation}
where $\textbf{a}=(a_1,\cdots,a_N)\quad \text{and}\quad \textbf{x}=(x_1,\cdots,x_N)$, $\sum_{\mathcal{G}, \lambda}\mu_{\mathcal{G}}(\lambda)=1$, $\sum_{\mathcal{G}}\eta_{\mathcal{G}}=1$, and $\mu_{\mathcal{G}}(\lambda),\eta_\mathcal{G}\geq 0$ for all $\mathcal{G}, \lambda$. Here, we use the definition of Ref.~\cite{Bancal13}, where all $p_{\mathcal{G}}$ are assumed to be no-signaling.
\end{definition} 
Now, we use the concatenation of catalytic protocols  for the activation of GMNL. 

\begin{result}
    \label{theorem2}
    Any random network state (Eq.~\ref{mixtinit}) that is not partition separable exhibits CLO activation of GMNL.
\end{result}

To prove this statement, we concatenate the previous CLO protocol with multi-copy catalysis using Theorem~\ref{lem2} to obtain the final state
\begin{equation}
    \tilde \tau_{S'}=\frac{1}{kn^2}\Psi_\text{GMNL}\otimes[1]+\left(1-\frac{1}{kn^2}\right)\tilde{\gamma}\otimes[\perp],
\end{equation}
where $\tilde{\gamma}$ is a quantum state, and $\Psi_\text{GMNL}=\Psi_{\text{ECN}}^{\otimes k}$ is GMNL for $k$ large enough. Indeed, any $N$-partite pure ECN state exhibits multi-copy GMNL activation~\cite{GMNL2}, i.e.~there exists a finite $k\leq N-1$ such that $\Psi_{\text{ECN}}^{\otimes k}$ is GMNL. It remains to be shown that $\tilde \tau_{S'}$ is itself GMNL. Since $\Psi_\text{GMNL}$ violates a GMNL inequality, there exists a set of measurements with outcomes $\{1,\cdots,m\}$ that leads to a GMNL distribution $p_\text{GMNL}(\vect{a},\vect{x})$. To prove GMNL of $\tilde \tau_{S'}$, we build a new GMNL distribution in the following way: if the flag $[1]_{S'}$ is measured, parties perform the measurements that leads to $p_\text{GMNL}$; otherwise, all parties always output an orthogonal inconclusive outcome $\varnothing$. This enlarges the set of possible outcomes to $\{1,\cdots,m\}\cup \{\varnothing\}$. Therefore, the full probability distribution is a mixture $p(\vect{a}|\vect{x})=\frac{1}{kn^2}\,p_\text{GMNL}(\mathbf{a}|\mathbf{x})+(1-\frac{1}{kn^2})\delta_\varnothing(\mathbf{a}|\mathbf{x})$ for some distribution $\delta_{\varnothing}$ having only $\varnothing$ as an output. Crucially, the following theorem ensures that this new distribution is still GMNL, a result that might be of independent interest.

\begin{restatable}{theorem}{THMThree}\label{thm2}
    Any multipartite distribution with local outcomes in $M \cup \bar{M}$, where $M$ and $\bar{M}$ are two orthogonal set of outcomes, of the form
    \begin{equation}\label{taufinal}
        p(\vect{a}|\vect{x})=\kappa\,p_\text{GMNL}(\vect{a}\in \bm M|\mathbf{x})+(1-\kappa)p_\perp(\mathbf{a}\in\bar{\bm M}|\mathbf{x}), 
    \end{equation}
    with $0<\kappa\leq 1$ is GMNL.
\end{restatable}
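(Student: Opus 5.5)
The plan is a proof by contradiction: assume $p(\vect{a}|\vect{x})$ in Eq.~\eqref{taufinal} admits a bilocal decomposition as in Eq.~\eqref{GMNLdistrib}, and extract from it a bilocal decomposition of $p_\text{GMNL}$. The structural fact to exploit is that $p$ is supported only on outcome strings where \emph{all} parties output in $M$, or \emph{all} parties output in $\bar M$. Since every term $\eta_{\mathcal G}\mu_{\mathcal G}(\lambda)\,p_{\mathcal G}(\vect{a}_{\mathcal G}|\vect{x}_{\mathcal G},\lambda)\,p_{\bar{\mathcal G}}(\vect{a}_{\bar{\mathcal G}}|\vect{x}_{\bar{\mathcal G}},\lambda)$ is nonnegative, each term with nonzero weight must itself vanish on all mixed strings, for every input $\vect{x}$.

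\textbf{Step 1: the sector of a term depends only on $(\mathcal G,\lambda)$.} Fix such a term. For any fixed $\vect{x}_{\bar{\mathcal G}}$, the distribution $p_{\bar{\mathcal G}}$ gives positive probability to some string. Hence $p_{\mathcal G}(\cdot|\vect{x}_{\mathcal G},\lambda)$ can give positive probability only to strings that are entirely in $M$ or entirely in $\bar M$ on $\mathcal G$, and symmetrically for $\bar{\mathcal G}$. Define
\begin{equation}
w(\vect{x}_{\mathcal G}) = p_{\mathcal G}(\vect{a}_{\mathcal G}\in \bm M|\vect{x}_{\mathcal G},\lambda), \qquad v(\vect{x}_{\bar{\mathcal G}}) = p_{\bar{\mathcal G}}(\vect{a}_{\bar{\mathcal G}}\in \bm M|\vect{x}_{\bar{\mathcal G}},\lambda),
\end{equation}
so the complementary weights are $1-w$ and $1-v$. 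The absence of mixed strings gives $w(1-v)=0$ and $(1-w)v=0$ for all input combinations. This forces $w=v$, constant over all inputs, and valued in $\{0,1\}$. Thus each pair $(\mathcal G,\lambda)$ is either of $M$-type or of $\bar M$-type, independently of $\vect{x}$. This input-independence is the crux of the argument, and it is where I expect the main subtlety. A naive conditioning on ``outcome in $M$'' would otherwise renormalize by an $\vect{x}$-dependent factor and destroy the bilocal structure.

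\textbf{Step 2: reassemble.} Sum $p(\vect{a}|\vect{x})$ over $\vect{a}\in\bm M$. By Eq.~\eqref{taufinal} this gives $\kappa$, while the decomposition gives the total weight $K$ of the $M$-type terms. Hence $K=\kappa>0$. Restricting both sides of the decomposition to $\vect{a}\in \bm M$ and dividing by $\kappa$ yields
\begin{equation}
p_\text{GMNL}(\vect{a}|\vect{x})=\sum_{\mathcal G}\sum_{\lambda\in M\text{-type}}\frac{\eta_{\mathcal G}\mu_{\mathcal G}(\lambda)}{\kappa}\,p_{\mathcal G}(\vect{a}_{\mathcal G}|\vect{x}_{\mathcal G},\lambda)\,p_{\bar{\mathcal G}}(\vect{a}_{\bar{\mathcal G}}|\vect{x}_{\bar{\mathcal G}},\lambda).
\end{equation}
This is a valid bilocal decomposition. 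The weights are nonnegative and sum to one. The response functions are unchanged, and on $M$-type terms they are already supported entirely in the $M$ sector, so they remain normalized and no-signaling as in Ref.~\cite{Bancal13}.

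\textbf{Conclusion.} The decomposition above contradicts the assumption that $p_\text{GMNL}$ is GMNL, so $p$ is GMNL. Note that $p_\perp$ plays no role beyond occupying the $\bar M$ sector. Moreover, $\kappa>0$ is exactly what guarantees that the $M$-type family is nonempty.
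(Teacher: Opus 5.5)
Your proof is correct and follows the paper's overall strategy: argue by contradiction, show that no term of a putative bilocal decomposition of $p$ can straddle the $M$ and $\bar M$ sectors, then restrict to the $M$ sector and renormalize by $\kappa$ to obtain a bilocal model for $p_\text{GMNL}$. The difference is the level at which you work, and your level gives you more.

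\emph{The paper's route.} The appendix works with the aggregated bilocal distributions $q_{\mathcal G}$. It uses no-signaling across the cut to show that $k_{\mathcal G}=q_{\mathcal G}(\bm M|\vect x)$ does not depend on $\vect x$. It then defines $r_{\mathcal G}$ as $q_{\mathcal G}$ conditioned on the all-$M$ event and simply asserts that $r_{\mathcal G}$ is again local across $\mathcal G|\bar{\mathcal G}$.

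\emph{Your route.} You argue term by term in $(\mathcal G,\lambda)$ and obtain $w=v\in\{0,1\}$, constant in $\vect x$, from the factorized form alone. This gives the input-independence of the sector weight and the bilocality of the restricted model in one stroke.

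\emph{Why the extra step matters.} The bilocality of the restriction is not automatic. Post-selecting a local model on an outcome event does not preserve locality in general; this is the mechanism behind the detection loophole. The appendix's claim that $r_{\mathcal G}$ is local therefore genuinely needs your term-wise dichotomy. That dichotomy is in fact what the paper's main-text sketch describes, but the formal proof leaves it implicit.

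\emph{A further gain.} Your Step 1 never uses no-signaling within $\mathcal G$ or $\bar{\mathcal G}$, and Step 2 leaves the response functions untouched. The argument therefore applies verbatim to other notions of bilocality, such as Svetlichny's, not only the no-signaling one of Ref.~\cite{Bancal13}.
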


The proof, detailed in Appendix~\ref{lemtechniqueAp} proceeds by contradiction. Supposing that $p$ is not GMNL and thus admits a decomposition in the form of Eq.~\eqref{GMNLdistrib}, we construct such a decomposition for $p_{\text{GMNL}}$. This construction is possible because the distributions $p_{\text{GMNL}}$ and $p_{\perp}$ have locally orthogonal outputs for all parties. This allows us to show, using the no-signaling assumption, that all product distributions $p_{\mathcal{G}}(\mathbf{a}_{\mathcal{G}}|\mathbf{x}_{\mathcal{G}},\lambda)p_{\bar{\mathcal{G}}}(\mathbf{a}_{\bar{\mathcal{G}}}|\mathbf{x}_{\bar{\mathcal{G}}},\lambda)$ in the decomposition of $p$ as in Eq.~\eqref{GMNLdistrib} must have all their outputs in either $M$ or $\bar {M}$ for each $\lambda$. 

Strikingly, an appropriate choice of catalyst for the procedure can lead to activation of GMNL with partition-separable catalysts, as illustrated in Appendix~\ref{exampleapendix}.

\section{Conclusion and outlook}

In this work, we demonstrate that GME can be activated via catalytic local operations. The process is deterministic and relies solely on local operations, without involving any form of communication. In particular, we show that this effect is generic: all states are GME under CLO if and only if they are not partition-separable. In turn, we introduce a ``sum-to-product'' catalytic protocol, transforming a flagged mixture of quantum states into their tensor product in a heralded manner. We apply this protocol to random network states, demonstrating CLO activation of GME and GMNL. 

This activation can be particularly strong. Consider a random network states involving only bipartite entanglement. Such state is not only biseparable, but is in fact $N-1$ separable (hence minimal in terms of entanglement depth). Yet, it can be catalytically transformed into a GME and GMNL state using a catalyst that is partition separable. 

More generally, our catalytic protocol could find other applications. For example, consider the effect of CHSH activation~\cite{Navascues11}: two states $\rho_1$ and $\rho_2$, both not violating CHSH, can be combined into $\rho_1 \otimes \rho_2$, which violates CHSH. Starting from the flagged mixture of $\rho_1$ and $\rho_2$ (which clearly does not violate CHSH), one will get CHSH violation catalytically. 

Our work complements a research line investigating the power of catalytic operations for boosting entanglement and other forms of quantum correlations such as Bell nonlocality. It is well known that catalysis can enable state transformations (notably via LOCC) which would otherwise be impossible \cite{jonathan99}, as strikingly illustrated by the phenomenon of embezzlement \cite{vanDamHayden2003}, or increasing the performance in protocols such as teleportation \cite{Lipka2021}. A common feature of these protocols is that they transform an initial resource state into a stronger one, for example turning a weakly entangled state into a strongly entangled one. In contrast, the catalytic activation scenario we presented transforms a biseparable state (hence completely resourceless) into a GME state. This is similar to the recent demonstration of catalytic activation of quantum nonlocality \cite{bavaresco}, whereby an entangled state admitting a local model is catalytically transformed into a state violating a Bell inequality. Finally, our work also complements recent ones demonstrating the multi-copy activation of GMNL~\cite{Contreras22, miethlinger26}, and catalytic activation of GMNL at the level of nonlocal boxes~\cite{ulu2026}.

Finally, our results question the operational meaning of the (standard) definition of GME. Indeed, one may argue that stability under CLO is a natural requirement. In a similar spirit, the authors of Ref.~\cite{Palazuelos22} argued that the lack of stability of GME in the multi-copy scenario is problematic. Alternative definitions could be considered, such as the notion of genuine multipartite network entanglement proposed in Ref.~\cite{Navascues20} (see also \cite{Kraft_2021})
which is multi-copy stable by construction. An interesting open question is whether this notion is also stable under CLO.

\acknowledgements
We thank Jessica Bavaresco, Patryk Lipka-Bartosik, and Bora Ulu for useful discussions.~We acknowledge funding by the Swiss National Science Foundations (projects 219366 and 236580). This work was funded with a grant from the Foundation for the University of Geneva.

\bibliographystyle{apsrev4-2}
\bibliography{ref}

\onecolumngrid
\newpage
\section*{Appendix}

\appendix
\section{\label{appendix1}General proof of \Cref{lem1} and continuous set of catalysts}

\THMOne*

\begin{proof}
The proof is independent of the number of parties $N$. The initial system state is $
    \rho_S=\sum_i \alpha_i \rho_i\otimes[i]$,
and the state $\rho_{S}\otimes \omega_C$ is therefore a mixture of $n^2$ elements, as represented in Table~\ref{tableapp1}.
 
We now proceed to construct the set of local operations $\Lambda$ leading to the desired transformation. The first step is to enlarge the system by local preparing of $n-1$ product states $\sigma^{\otimes (n-1)}$. The remaining operations are only made of local swaps and flags relabeling, as described below in steps (i)-(iv).

\begin{itemize}
    \item[\textbf{(i)}] When parties read flag $[1]_S\otimes [1]_C$, they build the tensor product state $\bigotimes_{k=1}^n \rho_k$ by swapping $\sigma^{\otimes (n-1)}$ with the state $\bigotimes_{k=2}^n \rho_k$ of the catalyst, 
 \begin{equation}
     \rho_1\otimes\sigma^{\otimes (n-1)}\otimes\left(\bigotimes_{k=2}^n \rho_k\right)\rightarrow \left(\bigotimes_{k=1}^n \rho_k\right)\otimes\sigma^{\otimes (n-1)}.
 \end{equation}
After that step, the tensor product branch of the final system state is reached. Note that this happen with probability $1/C_\alpha$. Additional operations are necessary to recover the catalyst in the second marginal. To do so, one needs to recover $n$ times each elements of the catalyst. Note that the  $n$ states $\sigma^{\otimes (n-1)}$ are already recovered in entries $[1]\otimes[1]$ and $[x] \otimes [n]$ for $x\neq n$.

\item[\textbf{(ii)}] Entries $[1]_S\otimes [x]_C$ for $x\neq1$ already give $n-1$ states $\bigotimes_{k=2}^n \rho_k$ in the catalyst subspace. Therefore no actions are required when parties measure these flags. The $n$-th state $\bigotimes_{k=2}^n \rho_k$ is recovered in the diagonal entry $[2]_S\otimes[2]_C$ by swapping $\rho_2$ system's state and a $\sigma$ of the catalyst:
\begin{equation}
     \rho_2\otimes\sigma^{\otimes (n-1)}\otimes\sigma \otimes \left(\bigotimes_{k=3}^n \rho_k\right)\rightarrow \sigma^{\otimes n}\otimes\left(\bigotimes_{k=2}^n \rho_k\right).
 \end{equation}
 
\item[\textbf{(iii)}] More generally, for all $j\in\{2,\dots,n-1\}$, entries $[j]_S\otimes [x]_C$ for $x\neq j$ give $n-1$ states $\sigma^{\otimes (j-1)} \otimes \left(\bigotimes_{k=j+1}^n \rho_k\right)$ in the catalyst subspace, and no actions are required when parties measure these flags. The $n$-th state $\sigma^{\otimes (j-1)} \otimes \left(\bigotimes_{k=j+1}^n \rho_k\right)$ is recovered in the diagonal entry $[j+1]_S\otimes[j+1]_C$ by swapping $\rho_{j+1}$ system's state and one $\sigma$ of the catalyst:
\begin{equation}
     \rho_{j+1} \otimes\sigma^{\otimes (n-1)}\otimes\sigma^{\otimes j} \otimes \left(\bigotimes_{k=j+2}^n \rho_k\right)\rightarrow \sigma^{\otimes n}\otimes\sigma^{(j-1)}\otimes \left(\bigotimes_{k=j+1}^n \rho_k\right).
 \end{equation}
\item[\textbf{(iv)}] At each step, parties add a copy of the catalyst flag in the system $[i]_S\otimes[j]_C\rightarrow [i]_{S'}\otimes[j]_{S'}\otimes [j]_C$ and perform the cyclic relabel of the diagonal entries $[i]_{S'}\otimes [i]_{S'}\otimes[i]_C\rightarrow [i-1]_{S'}\otimes [i-1]_{S'}\otimes[i-1]_C$ with $[0]\equiv[n]$.
\end{itemize}

After these operations, the final state $\tau_{S'C}$ is such that the partial trace $\tau_{S'} :=\text{Tr}_C(\tau_{S'C})$ is (up to a cyclic flag relabeling) of the form given in Eq.~\eqref{target}. For all $j\in\{1,\dots,n\}$, each term $\sigma^{\otimes (j-1)} \otimes \left(\bigotimes_{k=j+1}^n \rho_k\right)$ in the marginal state $\text{Tr}_{S'}(\tau_{S'C})$ appears with weight $\sum_{k\neq j} \frac{\alpha_k}{C_\alpha \alpha_j} + \frac{1}{C_\alpha} = \frac{1}{C_\alpha}\frac{1}{\alpha_j}$. It thus has the exact same weights as the catalyst given in Eq.~\eqref{catalyst}, and the catalytic marginal conditions of Eq.~\eqref{cdt} are thus satisfied.
\end{proof}

\begin{table*}[t]
\setlength{\tabcolsep}{6pt}
\renewcommand{\arraystretch}{1.7} 
\small 
\begin{tabular*}{\columnwidth}{@{\extracolsep{\fill}}c|cccccc}
\multirowcell{1.2}[3ex]{\diagbox[height=3\line]{\rlap{\raisebox{1ex}{(Cat)  }}}{\raisebox{-2ex}{Syst}}} & $\rho_1$ &
$\rho_2$ & $\cdots$ & $\cdots$ & $\cdots$ & 
$\rho_n$\\
\hline
$(\bigotimes_{k=2}^n\rho_k)$ & $\rho_1\otimes (\bigotimes_{k=2}^n\rho_k)$ &
$\rho_2\otimes (\bigotimes_{k=2}^n\rho_k)$ & $\cdots$ & $\cdots$ & $\cdots$ & 
$\rho_n\otimes (\bigotimes_{k=2}^n\rho_k)$\\[4pt]
$(\sigma \otimes \bigotimes_{k=3}^n\rho_k)$& $\rho_1\otimes (\sigma \otimes \bigotimes_{k=3}^n\rho_k)$ &
$\rho_2\otimes (\sigma \otimes \bigotimes_{k=3}^n\rho_k)$ & $\cdots$ & $\cdots$ & $\cdots$ & 
$\rho_n\otimes (\sigma \otimes \bigotimes_{k=3}^n\rho_k)$\\[4pt]
$\vdots$ & $\vdots$ &
$\vdots$ & $\ddots$ & $\ddots$ & $\ddots$ & 
$\vdots$\\[4pt]
$\vdots$ & $\vdots$ &
$\vdots$ & $\ddots$ & $\rho_j \otimes (\sigma^{\otimes (j-1)}\otimes \left(
\bigotimes_{k=j+1}^n \rho_k \right))$ & $\ddots$ & 
$\vdots$\\[4pt]
$\vdots$ & $\vdots$ &
$\vdots$ & $\ddots$ & $\ddots$ & $\ddots$ & 
$\vdots$\\[4pt]
$(\sigma^{(n-2)} \otimes \rho_n)$& $\rho_1\otimes (\sigma^{(n-2)} \otimes \rho_n)$ &
$\rho_2\otimes (\sigma^{(n-2)} \otimes \rho_n)$ & $\cdots$ & $\cdots$ & $\cdots$ & 
$\rho_n\otimes (\sigma^{(n-2)} \otimes \rho_n)$\\[4pt]
$(\sigma^{(n-1)})$& $\rho_1\otimes (\sigma^{(n-1)})$ &
$\rho_2\otimes (\sigma^{(n-1)})$ & $\cdots$ & $\cdots$ & $\cdots$ & 
$\rho_n\otimes (\sigma^{(n-1)})$
\end{tabular*}
\caption{Table representation of the state $\rho_S\otimes \omega_C$. %
Each entry is the tensor product of one system state with one catalyst state, carrying weight $\frac{\alpha_i}{C_\alpha \alpha_j}$ and a unique flag $[i]_S\otimes[j]_C$ that allows it to be distinguished from the others by local flag reading.}
\label{tableapp1}
\end{table*}

\begin{corollary}[Permuted catalysts]
    \label{lemApendix1}
Let $\mathcal{S}_n$ be the set of permutations of $\{1,\dots,n\}$. For any permutation $\Gamma \in \mathcal{S}_n$, the catalytic transformation of~\Cref{lem1} can be performed using the following permuted catalyst:
\begin{equation}
    \omega_C(\Gamma) = \frac{1}{C_\alpha}\sum_{j=1}^{n} \frac{1}{\alpha_j} 
\sigma^{\otimes (j-1)}\otimes\left(
\bigotimes_{k=j+1}^n \rho_{\Gamma(k)}\right) \otimes [j],
\label{permcatalyst}
\end{equation}
\end{corollary}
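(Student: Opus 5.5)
The plan is to reduce the corollary to \Cref{lem1} by relabelling, and then to recheck the catalytic marginal condition directly, since the weights $1/\alpha_j$ in Eq.~\eqref{permcatalyst} are \emph{not} simply the relabelled weights. Set $\rho'_k:=\rho_{\Gamma(k)}$ and $\alpha'_k:=\alpha_{\Gamma(k)}$. Relabelling the system flags $[\Gamma(k)]_S\to[k]_S$ is a local, invertible operation. It rewrites $\rho_S$ as $\sum_k\alpha'_k\rho'_k\otimes[k]$, with $C_{\alpha'}=C_\alpha$ and $\bigotimes_k\rho'_k=\bigotimes_k\rho_k$ up to a local reordering of subsystems. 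The $j$-th catalyst block in Eq.~\eqref{permcatalyst} is then exactly $\sigma^{\otimes(j-1)}\otimes\bigotimes_{k>j}\rho'_k$, the block used in \Cref{lem1}. I will therefore reuse steps (i)--(iv) of the appendix proof verbatim on the primed labels:
\begin{itemize}
\item[(i)] produce $\bigotimes_k\rho'_k$ at entry $[1]_S\otimes[1]_C$ by swapping in block $1$;
\item[(ii)--(iii)] regenerate block $j$ at entry $[j+1]_S\otimes[j+1]_C$ by swapping $\rho'_{j+1}$ for one $\sigma$;
\item[(iv)] copy the catalyst flag into $S'$ and cyclically relabel the diagonal entries.
\end{itemize}
Swaps and flag copying are local and keep the output flagged, so the system marginal again has the form of Eq.~\eqref{target} with weight $1/C_\alpha$ on the product branch.

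The substantive step is the catalyst marginal. Write $w_j$ for the weight of block $j$. Under the protocol, block $j$ is consumed only at entry $([j]_S,[j]_C)$, with mass $\alpha'_jw_j$. It is regenerated only from entry $([j+1]_S,[j+1]_C)$, with mass $\alpha'_{j+1}w_{j+1}$, cyclically in $j$. All other entries leave the catalyst untouched. The marginal $\text{Tr}_{S'}\tau_{S'C}$ equals $\omega_C$ if and only if these flows balance, i.e.\ $\alpha_{\Gamma(j)}w_j$ is independent of $j$. I will write out these $n$ balance equations explicitly, exactly as in the final weight count $\sum_{k\neq j}\frac{\alpha_k}{C_\alpha\alpha_j}+\frac1{C_\alpha}$ of the appendix proof.

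\textbf{Main obstacle.} Substituting the stated $w_j=1/(C_\alpha\alpha_j)$ turns the balance condition into the requirement that $\alpha_{\Gamma(j)}/\alpha_j$ be constant in $j$. This holds whenever $\Gamma$ permutes only states carrying equal weights, and in particular always holds for uniform weights, the case used for random network states. In that regime the corollary follows immediately from the argument above.

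For general weights I would first try to repair the mismatch without changing the catalyst. The idea is to reroute the recovery of block $j$ through other system flags, or through a convex combination of several entries, so as to absorb the excess mass. However, a block can only gain the single missing factor $\rho_{\Gamma(j+1)}$ from the system flag $\Gamma(j+1)$. The flow equations above therefore seem rigid. If rerouting fails, the argument shows the catalyst is valid precisely when its weights are $1/\alpha_{\Gamma(j)}$. I would record this as the correct general form, since it agrees with Eq.~\eqref{permcatalyst} in the equal-weight setting. Normalisation, $\sum_j w_j=1$, holds in either reading because $\Gamma$ is a bijection.
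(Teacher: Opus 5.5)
Your route is the paper's: its proof is the one-line remark that Eqs.~\eqref{mixturegeneral} and \eqref{target} are invariant under relabelling the indices, i.e.\ Theorem~\ref{lem1} applied to the ensemble $(\alpha_{\Gamma(k)},\rho_{\Gamma(k)})$. Your explicit flow check is correct, and it exposes what that remark glosses over. Relabelling permutes the weights together with the states, so what is actually established is the catalyst with weights $1/(C_\alpha\alpha_{\Gamma(j)})$; the product branch still has weight $1/C_\alpha$. The printed weights $1/(C_\alpha\alpha_j)$ work exactly when $\alpha_{\Gamma(j)}=\alpha_j$ for all $j$. Note that your ``constant ratio'' condition forces the constant to be $1$, by summing $\alpha_{\Gamma(j)}=c\,\alpha_j$ over $j$. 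This covers the uniform-weight case used for random network states. Corollary~\ref{lemApendix2} should likewise be built from the corrected catalysts.

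You can also drop the hedge about rerouting: with the printed weights, in general no local protocol at all works. Take three parties $A,B,C$, $n=2$, $\Gamma=(12)$, $\rho_1=\Phi_{AB}$ and $\rho_2=\Phi_{BC}$ maximally entangled, $\sigma$ pure product, and $\alpha_1\neq\alpha_2$. Then $1/C_\alpha=\alpha_1\alpha_2$, and Eq.~\eqref{permcatalyst} reads $\alpha_2\Phi_{AB}\otimes[1]+\alpha_1\sigma\otimes[2]$. The input branches (system flag, catalyst flag) $(1,1),(1,2),(2,1),(2,2)$ carry weights $\alpha_1\alpha_2,\alpha_1^2,\alpha_2^2,\alpha_1\alpha_2$.

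After locally dephasing the output flags, the required blocks ($\Phi_{AB}\otimes\Phi_{BC}$ in the system, $\Phi_{AB}$ in the catalyst) are pure. So every branch feeding them must output exactly these states, while product Kraus operators can neither create entanglement across a cut nor raise its Schmidt rank. Three consequences follow:
\begin{itemize}
\item[(a)] The product, of weight $\alpha_1\alpha_2$, can only come from branch $(2,1)$, of weight $\alpha_2^2$. This gives $\alpha_1\le\alpha_2$.
\item[(b)] On that part of branch $(2,1)$ the catalyst cannot lie in its $\Phi_{AB}$ block, since this would need Schmidt rank $4$ across $A|BC$.
\item[(c)] Branch $(2,2)$ is separable across $A|BC$, so it cannot feed that block either.
\end{itemize}
Hence the catalyst's $\Phi_{AB}$ block, of weight $\alpha_2=\alpha_1\alpha_2+\alpha_2^2$, receives at most $\alpha_1\alpha_2+\alpha_1^2+(\alpha_2^2-\alpha_1\alpha_2)$. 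This gives $\alpha_2\le\alpha_1$, a contradiction. The corollary is therefore false as printed for weight-changing $\Gamma$ in general, and your corrected weights $1/(C_\alpha\alpha_{\Gamma(j)})$ give the right statement.
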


\begin{proof}
    The proof comes from the fact that up to reordering, both the initial state of Eq.~\eqref{mixturegeneral} and the target state in Eq.~\eqref{target} are invariant under permutation of the indices. 
\end{proof}

\begin{corollary}[Continuous family of catalysts]
    \label{lemApendix2}
The catalytic transformation of~\Cref{lem1} can be performed using any flagged convex mixtures of the $n!$ permuted catalysts in Eq.~\eqref{permcatalyst}. Equivalently, denoting $\Gamma_1,\dots,\Gamma_{n!}$ the permutations of $\mathcal{S}_n$, for any family of coefficients $\beta_1,\dots,\beta_{n!} \geq 0$, such that $\sum_r \beta_r = 1$, the catalytic transformation of~\Cref{lem1} can be performed using the following catalyst:
\begin{equation}
    \omega_C(\bm \beta) = \sum_{r=1}^{n!}  \beta_r
\, \omega_C(\Gamma_r) \otimes [r].
\label{contcatalyst}
\end{equation}
\end{corollary}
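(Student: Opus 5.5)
The idea is to run, in parallel, the protocol of \Cref{lem1} (or \Cref{lemApendix1}) for each permuted catalyst, controlled by the outer classical flag $[r]$. The catalyst $\omega_C(\bm\beta)=\sum_r\beta_r\,\omega_C(\Gamma_r)\otimes[r]$ is a flagged mixture. Each party can therefore read its local copy of the outer flag $r$, and all parties obtain the same value because the flag is perfectly classically correlated. Each party then applies the local operation $\Lambda^{(\Gamma_r)}=\bigotimes_i\Lambda_i^{(\Gamma_r)}$ associated with the permuted catalyst $\omega_C(\Gamma_r)$ in \Cref{lemApendix1}. Finally, it rewrites the outer flag $[r]$ onto the catalyst register unchanged. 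The overall operation $\Lambda=\bigotimes_i\big(\sum_r \Lambda_i^{(\Gamma_r)}\otimes \ket{r}\bra{r}(\cdot)\ket{r}\bra{r}\big)$ is a product of local channels, so no communication is needed.

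Next I check the catalytic marginal. Linearity gives
\[
\tau_{S'C}=\sum_r\beta_r\,\Lambda^{(\Gamma_r)}\big(\rho_S\otimes\omega_C(\Gamma_r)\big)\otimes[r]_C .
\]
Tracing out $S'$ in each branch returns $\omega_C(\Gamma_r)$ by \Cref{lemApendix1}, so $\text{Tr}_{S'}\tau_{S'C}=\sum_r\beta_r\,\omega_C(\Gamma_r)\otimes[r]=\omega_C(\bm\beta)$, which is the left condition of Eq.~\eqref{cdt}.

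For the system marginal, branch $r$ yields $\frac{1}{C_\alpha}\big(\bigotimes_k\rho_{\Gamma_r(k)}\big)\otimes[1]+\sum_{i\ge2}p_i^{(r)}\gamma_i^{(r)}\otimes[i]$. Since the $\rho_k$ sit on distinct labelled registers, each party can apply a local permutation of its subsystems that depends on $r$. This reorders $\bigotimes_k\rho_{\Gamma_r(k)}$ into $\bigotimes_{k=1}^n\rho_k$. The identity of the swap used is also the reason \Cref{lemApendix1} holds.

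Averaging over $r$ with weights $\beta_r$ then gives $\frac{1}{C_\alpha}\bigotimes_k\rho_k\otimes[1]+\sum_{i\ge2}\big(\sum_r\beta_r p_i^{(r)}\gamma_i^{(r)}\big)\otimes[i]$. Setting $p_i=\sum_r\beta_rp_i^{(r)}$ and $\gamma_i=p_i^{-1}\sum_r\beta_rp_i^{(r)}\gamma_i^{(r)}$ recovers the form of Eq.~\eqref{target}, with $\sum_{i\ge2}p_i=1-1/C_\alpha$. One may optionally also keep a copy of $[r]$ in $S'$; this only refines the flags of the junk terms.

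The step needing the most care is the permutation invariance underlying \Cref{lemApendix1}: the success branch must deliver exactly $\bigotimes_k\rho_k$ in a fixed register order, independently of $r$. I would settle it by having the local channels include the $r$-dependent register relabelling, which is a local unitary on each party's subsystems. The remaining subtlety is that the outer flag must be left undisturbed on the catalyst, which holds because it is only read (a dephasing of an already diagonal register) and never altered.
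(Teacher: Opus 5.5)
Your proposal is correct and is essentially the paper's argument: each party reads its local copy of the classically correlated outer flag $[r]$, and the parties then run the permuted-catalyst protocol of \Cref{lemApendix1} in that branch. You also make explicit what the paper leaves implicit, namely the branch-wise marginal bookkeeping, the $r$-dependent local reordering of registers, and the merging of the junk terms.

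One caveat is inherited from the paper. For non-uniform $\alpha_i$, your step ``tracing out $S'$ in branch $r$ returns $\omega_C(\Gamma_r)$'' (and the success weight $1/C_\alpha$) holds only if the weights in Eq.~\eqref{permcatalyst} read $\frac{1}{C_\alpha\alpha_{\Gamma(j)}}$ rather than $\frac{1}{C_\alpha\alpha_j}$. The corrected form is what relabelling Theorem~\ref{lem1} actually produces.
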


\begin{proof}
    The local operations for the catalytic transformation is to first read the local flags $[r]$ to know which catalyst is shared, and then use the appropriate set of operations, following \Cref{lemApendix1}.
\end{proof}

\section{\label{appendix2}General proof of Theorem~\ref{lem2}}

\THMtwo*

\begin{proof}
    Be $(\omega_{01},\Lambda_{01})$ that operates without lost of generalities as
    \begin{equation}
        \Lambda_{01}(\rho_0\otimes \omega_{01})=p_1\rho_1\otimes[1]_S\otimes \omega_{01}^{(1)}\otimes [1]_C+\sum_{i> 1}p_i\sigma_0^{(i)}\otimes[i]_S\otimes \omega_{01}^{(i)}\otimes [i]_C.
    \end{equation}
Where $\omega_{01}=\sum_i\xi_i\omega_{01}^{(i)}\otimes [i]$. This state verifies the catalytic marginal conditions of Eq.~\eqref{cdt}. One thus applies that first CLO on the full state $\rho_0\otimes \omega_{01}\otimes \omega_{12}$
\begin{equation}
    \Lambda_{01}(\rho_0\otimes \omega_{01}\otimes\omega_{12})=\Lambda_{01}(\rho_0\otimes \omega_{01})\otimes\omega_{12}=p_1\rho_1\otimes[1]_S\otimes \omega_{01}^{(1)}\otimes [1]_C\otimes\omega_{12}+\sum_{i> 1}p_i\sigma_0^{(i)}\otimes[i]_S\otimes \omega_{01}^{(i)}\otimes [i]_C\otimes\omega_{12}.
\end{equation}
We now operate with the two following local operations:
\begin{itemize}
\item[\textbf{i)}] When the flag $[1]_S\otimes [1]_C$ is measured, We swap the catalyst subspaces and we apply the set $\Lambda_{12}$ on $\rho_1\otimes \omega_{12}$. That branch thus becomes
\begin{equation}
    p_1\left(  \left(q_1\rho_2\otimes\tilde{[2]}_S\otimes \omega_{12}^{(2)}\otimes \tilde{[2]}_C+\sum_{i> 1}q_i\sigma_1^{(i)}\otimes\tilde{[i]}_S\otimes \omega_{12}^{(i)}\otimes \tilde{[i]}_C\right)   \otimes[1]_S\otimes [1]_C\otimes\omega_{01}^{(1)}\right)
\end{equation}
The partial trace over the system is 
\begin{equation}
    p_1q_1\otimes \omega_{12}^{(2)}\otimes \tilde{[2]}_C\otimes [1]_C\otimes\omega_{01}^{(1)}+p_1\sum_{i> 1}q_i \omega_{12}^{(i)}\otimes \tilde{[i]}_C \otimes [1]_C\otimes\omega_{01}^{(1)}
\end{equation}
While the other partial trace is
\begin{equation}
    p_1q_2\rho_2\otimes\tilde{[2]}_S\otimes[1]_S+p_1\sum_{i> 1}q_i\sigma_1^{(i)}\otimes\tilde{[i]}_S   \otimes[1]_S
\end{equation}
\item[\textbf{ii)}] When any other flag is measured return the state unchanged with an additional orthogonal flag $\tilde{[j]}_S$. The partial trace over the system is thus
\begin{equation}
    \sum_{i> 1}\omega_{01}^{(i)}\otimes [i]_C\otimes\omega_{12}
\end{equation}
While the other one is
\begin{equation}
    \sum_{i> 1}p_i\sigma_0^{(i)}\otimes[i]_S\otimes \tilde{[j]}_S
\end{equation}
The sum of the partial traces of these two steps gives the full partial traces. The catalytic marginal is thus exactly $\omega_{01}\otimes\omega_{12}$. On the other hand, the second partial trace is, after flag relabel
\begin{equation}
    \tau_S=p_1q_2\rho_2\otimes[0]+\sum_i\xi_{i\neq 0}\sigma_2^{(i)}\otimes[i].
\end{equation}
\end{itemize}
The catalytic marginal conditions of Eq.~\eqref{cdt} are thus satisfied. 
\end{proof}

\section{Proof of \Cref{thm2}.}\label{lemtechniqueAp}

\THMThree*

\begin{proof}
    We proceed by contradiction and consider that the distribution $p$ is not GMNL, i.e.~can be written as 
    \begin{equation}
    \label{distribcntradiction}p(\vect{a}|\vect{x})=\sum_\mathcal{G}\eta_\mathcal{G}q_\mathcal{G}(\vect{a}_\mathcal{G},\vect{a}_{\bar{\mathcal{G}}}|\vect{x}_\mathcal{G},\vect{x}_{\bar{\mathcal{G}}}),
    \end{equation}
    where $q_\mathcal{G}(\vect{a}_\mathcal{G},\vect{a}_{\bar{\mathcal{G}}}|\vect{x}_\mathcal{G},\vect{x}_{\bar{\mathcal{G}}})$ are local distributions over the bi-partition $\mathcal{G}|\bar{\mathcal{G}}$ and with $\eta_\mathcal{G}\geq 0$, $\sum_\mathcal{G}\eta_\mathcal{G}=1$.
    
    The hypothesis directly implies that for any bi-partition $\mathcal{G}|\bar{\mathcal{G}}$,
    \begin{equation}
        \label{condition} p(\vect{a}_\mathcal{G}\in \bm{M},\vect{a}_{\bar{\mathcal{G}}}\notin\bm{M}|\vect{x})=p(\vect{a}_\mathcal{G}\notin \bm{M},\vect{a}_{\bar{\mathcal{G}}}\in\bm{M}|\vect{x})=0.
    \end{equation}
    By convexity, all the distributions $q_\mathcal{G}$ in Eq.~\eqref{distribcntradiction} also satisfies Eq.~\eqref{condition}. Together with the no-signaling condition (NS) that all local distributions $q_\mathcal{G}$ fulfill, this implies that for all $\mathcal{G}$ and all $\vect{x}_{\mathcal{G}}, \vect{x}_{\bar{\mathcal{G}}}$:
    \begin{equation}
    \begin{split}
        q_\mathcal{G}(\bm{M}|\vect{x}_\mathcal{G},\vect{x}_{\bar{\mathcal{G}}})& = \sum_{a_{\mathcal{G}}, a_{\bar{\mathcal{G}}}\in \bm{M}} q_\mathcal{G}(\vect{a}_{\mathcal{G}}\in\bm{M},\vect{a}_{\bar{\mathcal{G}}}\in\bm{M}|\vect{x}_\mathcal{G},\vect{x}_{\bar{\mathcal{G}}}) \\
        & \overset{\eqref{condition}}{=} \sum_{a_{\mathcal{G}} \in \bm{M}} \sum_{a_{\bar{\mathcal{G}}}}q_\mathcal{G}(\vect{a}_\mathcal{G}\in\bm{M},\vect{a}_{\bar{\mathcal{G}}}|\vect{x}_\mathcal{G},\vect{x}_{\bar{\mathcal{G}}})\\
        & \overset{\text{(NS)}}{=} \sum_{a_{\mathcal{G}} \in \bm{M}} q_\mathcal{G}(\vect{a}_\mathcal{G}\in\bm{M}|\vect{x}_\mathcal{G}).
    \end{split}
    \end{equation}
Especially it implies that $q_\mathcal{G}(\bm{M}|\vect{x}_\mathcal{G},\vect{x}_{\bar{\mathcal{G}}})$ doesn't depend on $\vect{x}_{\bar{\mathcal{G}}}$. Likewise, on can prove it doesn't depend on $\vect{x}_{\mathcal{G}}$, and thus with define $q_\mathcal{G}(\bm{M}|\vect{x}_\mathcal{G},\vect{x}_{\bar{\mathcal{G}}})=: k_\mathcal{G}$ which is indeed independent of $\vect{x}_\mathcal{G},\vect{x}_{\bar{\mathcal{G}}}$.

On the other hand, using the law of complete probabilities, one can write the distribution $q_\mathcal{G}$ as:
\begin{equation} \label{eq:qGdecomp}
    \begin{split}
        q_\mathcal{G}(\vect{a}_\mathcal{G},\vect{a}_{\bar{\mathcal{G}}}|\vect{x}_\mathcal{G},\vect{x}_{\bar{\mathcal{G}}})&=q_\mathcal{G}((\vect{a}_\mathcal{G}, \vect{a}_{\bar{\mathcal{G}}}) \cap (\vect{a}_{\mathcal{G}}\in\bm{M},\vect{a}_{\bar{\mathcal{G}}}\in\bm{M})|\vect{x}_\mathcal{G},\vect{x}_{\bar{\mathcal{G}}}) \\
        & \qquad + q_\mathcal{G}((\vect{a}_\mathcal{G}, \vect{a}_{\bar{\mathcal{G}}})\cap (\vect{a}_{\mathcal{G}}\notin\bm{M},\vect{a}_{\bar{\mathcal{G}}}\in\bm{M})|\vect{x}_\mathcal{G},\vect{x}_{\bar{\mathcal{G}}}) \\
        & \qquad + q_\mathcal{G}((\vect{a}_\mathcal{G}, \vect{a}_{\bar{\mathcal{G}}})\cap (\vect{a}_{\mathcal{G}}\in\bm{M},\vect{a}_{\bar{\mathcal{G}}}\notin\bm{M})|\vect{x}_\mathcal{G},\vect{x}_{\bar{\mathcal{G}}}) \\
        & \qquad + q_\mathcal{G}((\vect{a}_\mathcal{G}, \vect{a}_{\bar{\mathcal{G}}})\cap (\vect{a}_{\mathcal{G}}\notin\bm{M},\vect{a}_{\bar{\mathcal{G}}}\notin\bm{M})|\vect{x}_\mathcal{G},\vect{x}_{\bar{\mathcal{G}}}) \\
     & \overset{\eqref{condition}}{=} q_\mathcal{G}((\vect{a}_\mathcal{G}, \vect{a}_{\bar{\mathcal{G}}}) \cap(\vect{a}_{\mathcal{G}}\in\bm{M},\vect{a}_{\bar{\mathcal{G}}}\in\bm{M})|\vect{x}_\mathcal{G},\vect{x}_{\bar{\mathcal{G}}}) \\
        & \qquad + q_\mathcal{G}((\vect{a}_\mathcal{G}, \vect{a}_{\bar{\mathcal{G}}})\cap(\vect{a}_{\mathcal{G}}\notin\bm{M},\vect{a}_{\bar{\mathcal{G}}}\notin\bm{M})|\vect{x}_\mathcal{G},\vect{x}_{\bar{\mathcal{G}}}) \\
&=k_\mathcal{G} \cdot r_\mathcal{G}(\vect{a}_\mathcal{G},\vect{a}_{\bar{\mathcal{G}}}|\mathbf{x}_\mathcal{G},\vect{x}_{\bar{\mathcal{G}}})
    +q_\mathcal{G}((\vect{a}_\mathcal{G}, \vect{a}_{\bar{\mathcal{G}}})\cap(\vect{a}_{\mathcal{G}}\notin\bm{M},\vect{a}_{\bar{\mathcal{G}}}\notin\bm{M})|\vect{x}_\mathcal{G},\vect{x}_{\bar{\mathcal{G}}}),
    \end{split}
\end{equation}
where we defined a new local probability distribution
\begin{equation}\label{defr}
\begin{split}
    & r_\mathcal{G}(\vect{a}_\mathcal{G},\vect{a}_{\bar{\mathcal{G}}}|\mathbf{x}_\mathcal{G},\vect{x}_{\bar{\mathcal{G}}}) := q_\mathcal{G}(\vect{a}_\mathcal{G}, \vect{a}_{\bar{\mathcal{G}}}|\vect{x}_\mathcal{G},\vect{x}_{\bar{\mathcal{G}}}, \vect{a}_{\mathcal{G}}\in\bm{M},\vect{a}_{\bar{\mathcal{G}}}\in\bm{M}).
\end{split}
\end{equation}
Notice that since it can by definition only take outputs in $\bm{M}$, it satisfies 
\begin{equation} \label{eq:rGprop}
    \sum_{\vect{a}_{\mathcal{G}},\vect{a}_{\bar{\mathcal{G}}}\in\bm{M}} r_\mathcal{G}(\vect{a}_\mathcal{G},\vect{a}_{\bar{\mathcal{G}}}|\mathbf{x}_\mathcal{G},\vect{x}_{\bar{\mathcal{G}}}) = \sum_{\vect{a}_{\mathcal{G}},\vect{a}_{\bar{\mathcal{G}}}} r_\mathcal{G}(\vect{a}_\mathcal{G},\vect{a}_{\bar{\mathcal{G}}}|\mathbf{x}_\mathcal{G},\vect{x}_{\bar{\mathcal{G}}})=1.
\end{equation}
    
Inserting Eq.~\eqref{eq:qGdecomp} into Eq.~\eqref{distribcntradiction}, leads to 
    \begin{equation}
        p(\vect{a}|\vect{x})=\sum_\mathcal{G}\eta_\mathcal{G}\big[ k_\mathcal{G} \cdot r_\mathcal{G}(\vect{a}_\mathcal{G},\vect{a}_{\bar{\mathcal{G}}}|\mathbf{x}_\mathcal{G},\vect{x}_{\bar{\mathcal{G}}})
    +q_\mathcal{G}(\vect{a}_\mathcal{G}, \vect{a}_{\bar{\mathcal{G}}}, \vect{a}_{\mathcal{G}}\notin\bm{M},\vect{a}_{\bar{\mathcal{G}}}\notin\bm{M}|\vect{x}_\mathcal{G},\vect{x}_{\bar{\mathcal{G}}}) \big].
    \end{equation}
    Equalizing with Eq.\eqref{taufinal}, for all $\vect{a}\in \bm{M}$, it holds that  
    \begin{equation}
        p_\text{GMNL}(\vect{a}\in \bm{M}|\vect{x})=\sum_\mathcal{G}\tilde{\eta}_\mathcal{G} r_\mathcal{G}(\vect{a}_\mathcal{G},\vect{a}_{\bar{\mathcal{G}}}|\mathbf{x}_\mathcal{G},\vect{x}_{\bar{\mathcal{G}}}),
    \end{equation}
    where $\tilde{\eta}_\mathcal{G}:=\eta_\mathcal{G}k_\mathcal{G}/\kappa\geq 0$. Summing over all $a\in \bm{M}$, and using Eq.~\eqref{eq:rGprop}, we obtain $\sum_\mathcal{G} \tilde{\eta}_\mathcal{G} = 1$, i.e.~that $\tilde{\eta}$ is a normalized probability distribution over $\{\mathcal{G}\}$, which contradicts the assumption on $p_\text{GMNL}$. 
\end{proof}
\newpage
\section{Visualization of the catalyst state for the transformation in Fig.~\ref{image}.} \label{app:catalystfig1}

\begin{figure}[h]
    \centering
    \includegraphics[trim=6cm 10.5cm 6.5cm 6cm, clip=true, width=1\linewidth]{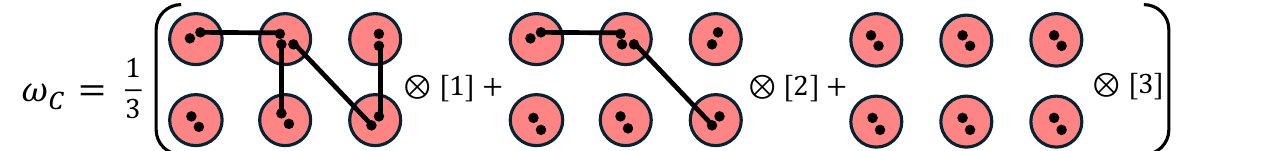}
    \caption{Graphic representation of one of the catalysts that enables the transformation represented in Fig.~\ref{image}. The latter is of the form $\omega_C=\frac{1}{3}(\rho_2\otimes\rho_3\otimes[1]+\sigma\otimes\rho_3\otimes[2]+\sigma\otimes\sigma\otimes[3])$.}
\end{figure}

\section{Visualization of different possible catalysts \label{exampleapendix}}
\begin{figure}[h]
    \centering
    \includegraphics[trim=0cm 4cm 9cm 1cm, clip=true, width=1\linewidth]{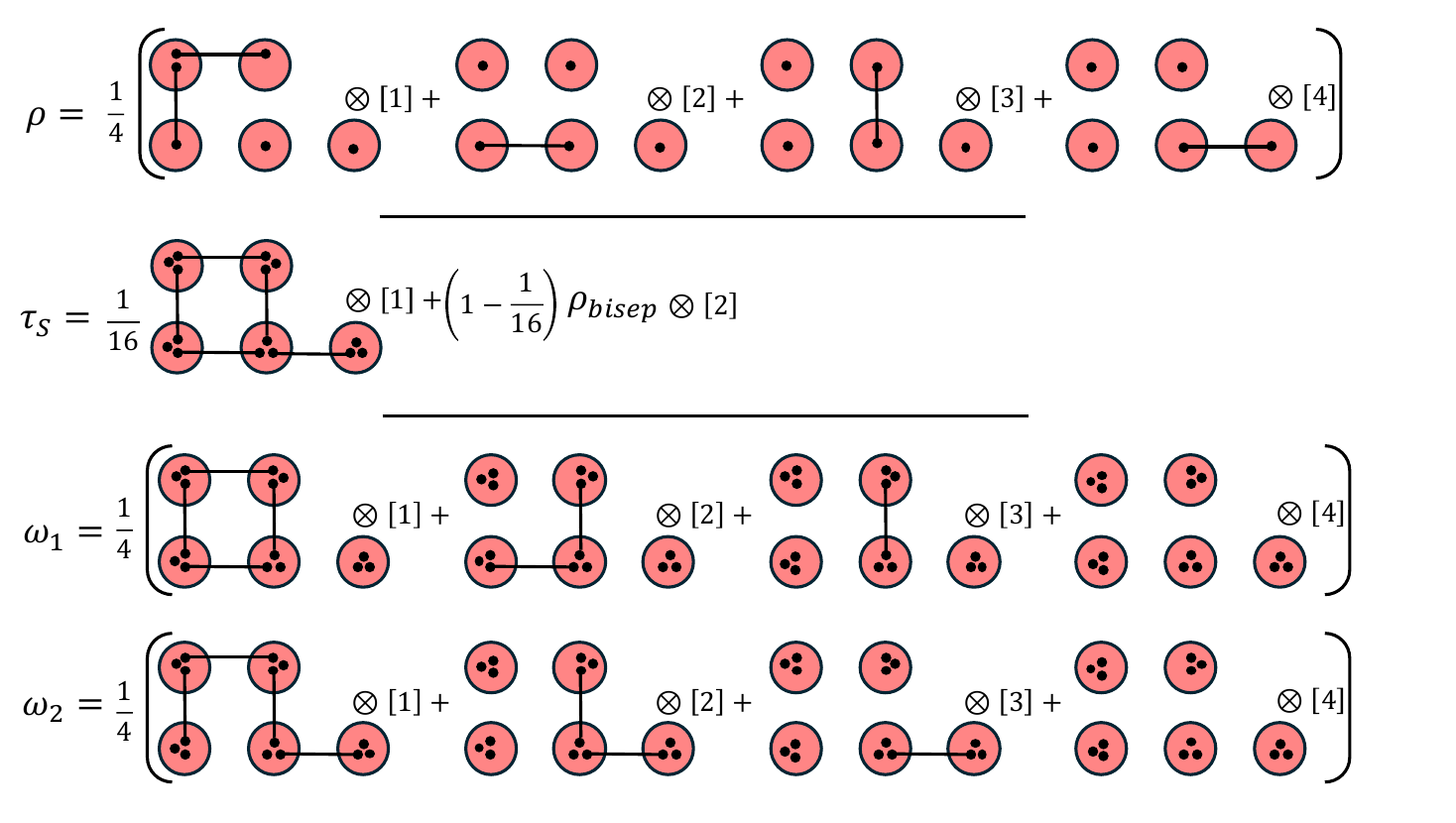}
    \caption{Graphic representation of a biseparable state $\rho$ that is catalytically transformable via \Cref{lem1} into the target state~$\tau_S$ that is $5$-partite GMNL. Following \Cref{lemApendix1}, $\omega_1$ and $\omega_2$ are two of the $16$ possible canonical catalysts for such transformation. However, $\omega_1$ is partition separable over the 5 parties, and thus not GME nor GMNL, while $\omega_2$ is already $5$-partite GMNL.}
    \label{appendixfig}
\end{figure}

\end{document}